\documentclass[conference]{IEEEtran}

\usepackage{amsmath,amssymb,amsthm}
\usepackage{graphicx}
\usepackage{booktabs}
\usepackage{multirow}
\usepackage{array}
\usepackage{microtype}
\usepackage{url}
\usepackage{cite}
\usepackage{enumitem}
\usepackage[hidelinks]{hyperref}
\usepackage{caption}
\usepackage{xcolor}

\graphicspath{{./}}

\newtheorem{theorem}{Theorem}
\newtheorem{proposition}{Proposition}
\newtheorem{definition}{Definition}
\newtheorem{corollary}{Corollary}
\newtheorem{remark}{Remark}

\begin{document}
\IEEEoverridecommandlockouts

\title{Why Formal Monitors Fail: Attack Distribution Entropy\\
as a Coverage Bound for LTL-Based LLM Agent Safety}

\author{
\IEEEauthorblockN{Ruiyang Zhang\thanks{Author's accepted version. Accepted at the 2026 IEEE 13th International Conference on Intelligent Systems (IS). \copyright~2026 IEEE. Personal use of this material is permitted. Permission from IEEE must be obtained for all other uses, in any current or future media, including reprinting/republishing this material for advertising or promotional purposes, creating new collective works, for resale or redistribution to servers or lists, or reuse of any copyrighted component of this work in other works.}}
\IEEEauthorblockA{Ryonix Labs Inc. \quad Flock.io\\ Email: zhangruiyang36@gmail.com}
}

\maketitle

\begin{abstract}
Runtime safety monitors based on Linear Temporal Logic (LTL) and finite
automata (FSA) are increasingly deployed to intercept unsafe tool-call
sequences in LLM agents. Yet across empirical evaluations, the same monitor
achieves 68--75\% attack coverage on some model architectures and near-zero
coverage on others---with no explanation from capability scores, training
data, or prompt design. We provide the missing theoretical explanation.
We prove that the recall of any fixed-invariant FSA monitor is bounded above
by the \emph{concentration} of the attack distribution: the fraction of
attacks accounted for by the $k$ most frequent trigger-completion patterns.
When attacks concentrate (low Shannon entropy), a small fixed invariant set
achieves high recall. When attacks disperse across many structurally distinct
patterns (high Shannon entropy), no fixed invariant set of tractable size can
achieve meaningful recall---regardless of how the invariants were derived.
We validate this \emph{entropy-coverage bound} empirically across eight
frontier LLM architectures. Attack distributions on GPT-class and
DeepSeek backends are highly concentrated ($H \approx 0.24$\,bits;
one pattern covers 96\% of attacks), explaining 68--75\% monitor recall.
Gemini variants produce high-entropy attack distributions ($H \approx 2.81$\,bits;
7 structurally distinct clusters each covering $\leq$7\% of attacks),
explaining near-zero monitor recall (6--13\%)---confirmed invariant to
architecture-matched retraining. Across architectures, entropy accounts
for 76\% of variance in monitor coverage (Pearson $r=-0.87$, $p=0.005$,
bootstrap 95\% CI $[-0.98, -0.78]$), with the correlation holding under
leave-one-out analysis ($r \in [-0.91, -0.82]$). We introduce a
\emph{pre-deployment entropy test} that predicts monitor coverage from a
small attack sample, enabling architecture-aware monitor selection before
production deployment. The bound and test are architecture-agnostic and
apply to any FSA-based runtime monitor over discrete action sequences.
\end{abstract}

\begin{IEEEkeywords}
LTL runtime monitors, attack distribution entropy, FSA coverage bounds,
LLM agent safety, formal verification limits, intelligent systems security
\end{IEEEkeywords}

\section{Introduction}

Finite automata (FSA) runtime monitors enforcing Linear Temporal Logic (LTL)
invariants are a natural fit for LLM agent safety: they are
computationally cheap, formally interpretable, and architecture-agnostic.
A growing body of work~\cite{agentc,agentspec,probguard,agentsentry} confirms
they can reduce attack success rates substantially. But practitioners
deploying these monitors encounter a troubling phenomenon: the same monitor
that blocks 75\% of attacks on one backend blocks fewer than 7\% on another,
with no correlation to model capability scores.

The standard engineering response is to collect more training data or tune
hyperparameters. We show this is insufficient. The failure is not
engineering---it is structural. \emph{Fixed-invariant FSA monitors are
provably bounded by the concentration of the attack distribution, and no
amount of additional data changes a high-entropy distribution into a
low-entropy one.}

This paper makes three contributions:

\textbf{C1 (Theory).} We derive the \emph{entropy-coverage bound}: for a
fixed set of $k$ LTL invariants, monitor recall is bounded above by
$\mathrm{Cov}_k(\mathcal{D}) = \sum_{i=1}^{k} p_i^*$ where $p_1^* \geq
p_2^* \geq \cdots$ are the ordered pattern probabilities of attack
distribution $\mathcal{D}$. Equivalently, high Shannon entropy
$H(\mathcal{D})$ implies that $\mathrm{Cov}_k(\mathcal{D})$ is small for
any tractable $k$.

\textbf{C2 (Empirical validation).} We measure attack distribution
entropy for eight frontier LLM architectures across 847 adversarial
trajectories. Entropy explains 76\% of variance in monitor coverage
(Pearson $r=-0.87$, $p=0.005$; bootstrap 95\% CI $[-0.98, -0.78]$;
leave-one-out $r \in [-0.91, -0.82]$). The Spearman correlation between
Elo capability score and monitor coverage is not significant
($|\rho| < 0.35$, $p > 0.4$), confirming entropy---not capability---as
the operative predictor.

\textbf{C3 (Pre-deployment test).} We introduce a practical entropy test
that estimates $H(\mathcal{D})$ from 50--100 attack trajectories and
predicts whether an FSA monitor will achieve meaningful coverage on a given
backend, enabling architecture-aware monitor selection before production
deployment.

We answer the open question of \emph{why} the theater gap (near-zero
monitor recall on specific architectures, invariant to retraining) exists.
The answer is information-theoretic, not architectural: Gemini variants
generate high-entropy attack distributions that exceed the coverage
capacity of any fixed invariant set.

\section{Background}

\subsection{LTL-FSA Monitors for LLM Agents}

A runtime safety monitor observes the agent's tool-call sequence
$s = (t_1, t_2, \ldots)$ and flags or halts execution when an unsafe
pattern is detected. LTL-based monitors encode safety properties of the form:
\begin{equation}
  \phi = \mathbf{G}(T_{\text{trig}} \rightarrow \mathbf{F}_{[1,k]}(t_{\text{comp}}))
  \label{eq:ltl}
\end{equation}
where $\mathbf{G}$ is ``globally,'' $\mathbf{F}_{[1,k]}$ is ``within $k$
steps,'' $T_{\text{trig}}$ is a trigger set, and $t_{\text{comp}}$ is a
forbidden completion. Each formula is compiled to a sliding-window FSA.
A monitor $\mathcal{M} = \{\phi_1, \ldots, \phi_n\}$ fires if any $\phi_i$
is violated.

\textbf{Prior work.} Agent-C~\cite{agentc} and AgentSpec~\cite{agentspec}
use manually authored LTL rules. ProbGuard~\cite{probguard} learns Markov
chains from benign traces. Recent work derives invariants automatically
from red-team attack trajectories~\cite{akande2025,koucham2018}, reporting
coverage (recall) across multiple architectures. We build on this line
of work by providing the theoretical explanation for observed coverage
variance.

\subsection{Shannon Entropy and Distribution Concentration}

For a discrete probability distribution $\mathcal{D}$ over $n$ patterns
with probabilities $\{p_i\}$:
\begin{equation}
  H(\mathcal{D}) = -\sum_{i=1}^{n} p_i \log_2 p_i \quad \text{(bits)}
  \label{eq:entropy}
\end{equation}
$H = 0$ when all mass is on one pattern (maximally concentrated);
$H = \log_2 n$ when mass is uniform (maximally dispersed). The
\emph{top-$k$ concentration} $C_k(\mathcal{D}) = \sum_{i=1}^{k} p_i^*$
(sum of $k$ largest probabilities) is a natural coverage proxy.
For fixed $k$, low $H$ implies high $C_k$.

\subsection{Monitorability in Runtime Verification}

Classical runtime verification theory~\cite{rltl2022} characterizes which
LTL properties are \emph{monitorable}---detectable from finite prefixes.
Bounded-finally formulas of the form~(\ref{eq:ltl}) are monitorable.
However, monitorability is a property of the formula, not of the attack
distribution. We extend the analysis to ask: given a monitorable formula
class, when does the attack distribution make a fixed formula set
\emph{insufficient} regardless of formula derivation method?

\section{Theory: The Entropy-Coverage Bound}
\label{sec:theory}

\subsection{Setup}

Let $\Sigma$ be the tool-call alphabet of an LLM agent deployment.
An \emph{attack trajectory} $\tau = (t_1, \ldots, t_m) \in \Sigma^*$ is
a sequence of tool calls executing an attacker's objective. An \emph{attack
distribution} $\mathcal{D}$ is a probability distribution over the set of
observable trigger-completion patterns $\Pi = \{(T, t_c) : T \subset
\Sigma^k, t_c \in \Sigma\}$ induced by attacks in the deployment.

A \emph{fixed-invariant FSA monitor} $\mathcal{M} = \{\phi_1, \ldots,
\phi_n\}$ covers pattern $(T_i, t_{c,i})$ if $\phi_j$ fires on any
trajectory exhibiting that pattern for some $j \leq n$.

\begin{definition}[Monitor Recall]
The recall of monitor $\mathcal{M}$ under attack distribution $\mathcal{D}$
is:
\begin{equation}
  R(\mathcal{M}, \mathcal{D}) = \sum_{(T, t_c) \in \Pi} p_{(T,t_c)} \cdot
  \mathbf{1}[\mathcal{M}~\text{covers}~(T, t_c)]
\end{equation}
\end{definition}

\subsection{The Coverage Bound}

\begin{theorem}[Entropy-Coverage Bound]
\label{thm:bound}
For any fixed-invariant FSA monitor $\mathcal{M}$ containing $n$ invariants
and attack distribution $\mathcal{D}$ over patterns $\Pi$:
\begin{equation}
  R(\mathcal{M}, \mathcal{D}) \leq C_n(\mathcal{D}) = \sum_{i=1}^{n} p_i^*
  \label{eq:bound}
\end{equation}
where $p_1^* \geq p_2^* \geq \cdots$ are the ordered pattern probabilities
of $\mathcal{D}$.
\end{theorem}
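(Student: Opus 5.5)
The plan is to reduce the theorem to a purely combinatorial statement about sums of probabilities over a set of bounded size. Let $S(\mathcal{M}) \subseteq \Pi$ denote the set of patterns covered by $\mathcal{M}$. By the definition of monitor recall,
\begin{equation*}
  R(\mathcal{M}, \mathcal{D}) = \sum_{(T,t_c) \in S(\mathcal{M})} p_{(T,t_c)} .
\end{equation*}
The whole argument therefore rests on two steps. First, $|S(\mathcal{M})| \leq n$. Second, any sum of at most $n$ of the $p$'s is bounded by $C_n(\mathcal{D})$.

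\textbf{Step 1 (each invariant covers at most one pattern).} Each invariant $\phi_j$ has the form~(\ref{eq:ltl}), $\mathbf{G}(T_{\text{trig}} \rightarrow \mathbf{F}_{[1,k]}(t_{\text{comp}}))$. It is therefore determined by a single trigger set and a single forbidden completion, so it names exactly one element $(T_j, t_{c,j}) \in \Pi$. I will make precise that patterns in $\Pi$ are individuated at exactly this granularity: a pattern is the trigger-completion pair an invariant is written over. Under that reading, $\phi_j$ covers the pattern it names and no other, so the map $j \mapsto (T_j, t_{c,j})$ is a surjection from $\{1,\ldots,n\}$ onto $S(\mathcal{M})$. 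Hence $|S(\mathcal{M})| \leq n$, with strict inequality if two invariants name the same pattern.

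\textbf{Step 2 (top-$n$ maximality).} Index $\Pi$ so that $p_1^* \geq p_2^* \geq \cdots$, and write $S(\mathcal{M}) = \{\pi_{i_1}, \ldots, \pi_{i_m}\}$ with $m \leq n$ and $i_1 < \cdots < i_m$. Then $i_\ell \geq \ell$ for each $\ell$, so $p_{i_\ell}^* \leq p_\ell^*$. Summing over $\ell$ gives
\begin{equation*}
  R(\mathcal{M}, \mathcal{D}) \leq \sum_{\ell=1}^{m} p_\ell^* \leq \sum_{\ell=1}^{n} p_\ell^* = C_n(\mathcal{D}),
\end{equation*}
where the last inequality uses nonnegativity of the $p$'s. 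If $\Pi$ is countably infinite, the ordering still exists because only finitely many patterns exceed any positive threshold, and the argument is unchanged. If $|\Pi| < n$, we pad with zeros and the bound is trivial.

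\textbf{Main obstacle.} Step 2 is routine; the main difficulty is Step 1. The setup's phrase ``covers pattern $(T_i,t_{c,i})$ if $\phi_j$ fires on any trajectory exhibiting that pattern'' does not by itself forbid a single formula from firing on trajectories of several distinct patterns. That could happen, for example, when trigger sets overlap, or when a window containing one pattern happens to contain another formula's trigger and completion. If that is allowed, the bound as stated can fail. My first approach is to fix the convention explicitly: $\Pi$ is the set of patterns expressible as single invariants of form~(\ref{eq:ltl}), and distinct patterns are those that no single invariant jointly covers. If that is too restrictive, the fallback is a weaker bound. Let $c$ be the maximum number of patterns a single invariant can cover. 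Then $|S(\mathcal{M})| \leq cn$, and Step 2 gives $R(\mathcal{M},\mathcal{D}) \leq C_{cn}(\mathcal{D})$, which recovers the theorem when $c = 1$.
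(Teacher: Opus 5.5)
Your proposal is correct and takes the same route as the paper. First, $|S(\mathcal{M})| \leq n$ because each bounded-finally invariant names a single trigger-completion pair. Second, any sum of at most $n$ pattern probabilities is at most the top-$n$ sum $C_n(\mathcal{D})$. The paper simply asserts both points: that each formula ``detects the specific pattern $\pi_j$,'' and that the maximum over $|S| \leq n$ equals the top-$n$ sum. Your explicit one-pattern-per-invariant convention in Step~1 and your index argument $i_\ell \geq \ell$ in Step~2 make precise exactly what the paper's proof leaves implicit.
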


\begin{proof}
Each formula $\phi_j = \mathbf{G}(T_j \rightarrow \mathbf{F}_{[1,k]}
t_{c,j})$ of the bounded-finally class (\ref{eq:ltl}) detects
trajectories exhibiting the specific trigger-completion pattern
$\pi_j = (T_j, t_{c,j})$. Because patterns are defined as distinct
trigger-completion pairs, no two distinct formulas in $\mathcal{M}$
detect the same pattern: each formula is semantically disjoint at the
pattern level. Therefore monitor $\mathcal{M} = \{\phi_1,\ldots,\phi_n\}$
covers a set of patterns $S \subseteq \Pi$ with $|S| \leq n$.

Since attack events are attributable to at most one pattern (patterns
are mutually exclusive by definition of trigger-completion pairs), recall
is additive over covered patterns:
\begin{equation}
  R(\mathcal{M}, \mathcal{D}) = \sum_{\pi \in S} p_\pi
  \;\leq\; \max_{\substack{S \subseteq \Pi \\ |S| \leq n}} \sum_{\pi \in S} p_\pi
  \;=\; \sum_{i=1}^{n} p_i^* \;=\; C_n(\mathcal{D}).
\end{equation}
The maximum is attained by the greedy assignment $S^* =
\{\pi_1^*, \ldots, \pi_n^*\}$ (the $n$ most probable patterns), which
is achievable by constructing invariants targeting those patterns
exactly. \qed
\end{proof}

\begin{corollary}[Entropy Lower Bound on Residual Attack Rate]
\label{cor:residual}
The minimum achievable ASR under any $n$-invariant FSA monitor satisfies:
\begin{equation}
  \mathrm{ASR}_{\min}(n) \geq 1 - C_n(\mathcal{D})
\end{equation}
When $H(\mathcal{D}) \geq \log_2(n+1)$, we have $C_n(\mathcal{D}) \leq
n/(n+1)$, so ASR cannot be driven to zero by any $n$-invariant monitor.
\end{corollary}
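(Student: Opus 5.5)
The plan has two parts. The first inequality follows directly from Theorem~\ref{thm:bound}. The second, entropy-based clause is where the real work lies, and as literally stated it needs an extra hypothesis.

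\textbf{Step 1 (residual ASR).} I will identify the attack success rate under monitor $\mathcal{M}$ with the probability mass of attacks the monitor misses, that is, $\mathrm{ASR}(\mathcal{M}) = 1 - R(\mathcal{M},\mathcal{D})$. This uses the fact that patterns partition attack events, which is the same exclusivity used in the proof of Theorem~\ref{thm:bound}. Theorem~\ref{thm:bound} then gives, for every $n$-invariant monitor, $\mathrm{ASR}(\mathcal{M}) \geq 1 - C_n(\mathcal{D})$. Taking the infimum over all such $\mathcal{M}$ yields $\mathrm{ASR}_{\min}(n) \geq 1 - C_n(\mathcal{D})$. The bound is in fact tight, by the greedy construction in that proof.

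\textbf{Step 2 (entropy $\Rightarrow$ concentration).} The aim is to show $C_n(\mathcal{D}) \leq n/(n+1)$. The cleanest route is through the largest mass. If $p_1^* \leq 1/(n+1)$, then every $p_i^* \leq 1/(n+1)$, so $C_n = \sum_{i\le n} p_i^* \leq n/(n+1)$.

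The issue is how to obtain $p_1^* \leq 1/(n+1)$ from the entropy hypothesis. Shannon entropy does \emph{not} control $p_1^*$ in general. For example, with $n=1$, take $p_1^* = 0.99$ and spread the remaining $0.01$ over $N$ patterns of mass $0.01/N$ each. Then $H \geq 0.01\log_2(100N)$, which exceeds $1=\log_2 2$ once $N$ is large enough. Yet $C_1 = 0.99 > 1/2$. So the implication fails for unbounded support, and I would repair the hypothesis in one of two ways:
\begin{enumerate}[label=(\alph*)]
\item Replace $H$ by the min-entropy $H_\infty(\mathcal{D}) = -\log_2 p_1^*$. The condition $H_\infty \geq \log_2(n+1)$ is literally $p_1^* \leq 1/(n+1)$, and the argument above closes. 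Since $H \geq H_\infty$, this is a strictly stronger hypothesis than the one stated.
\item Keep Shannon entropy but restrict to $|\Pi| \leq n+1$. Then $H \leq \log_2(n+1)$, with equality only for the uniform distribution, so $H \geq \log_2(n+1)$ forces uniformity and $C_n = n/(n+1)$ exactly.
\end{enumerate}

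\textbf{Step 3 (strict positivity).} Under either repaired hypothesis, Step 1 gives $\mathrm{ASR}_{\min}(n) \geq 1/(n+1) > 0$, so no $n$-invariant monitor drives ASR to zero.

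The main obstacle is Step 2. I would present variant (a) as the corrected statement, note (b) as the bounded-support special case, and include the counterexample above to show that the Shannon-entropy version needs one of these hypotheses.
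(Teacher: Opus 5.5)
Your Step 1 is exactly the intended argument. The paper states this corollary without proof, as an immediate consequence of Theorem~\ref{thm:bound}. Your identification $\mathrm{ASR}(\mathcal{M}) = 1 - R(\mathcal{M},\mathcal{D})$, which the paper never makes explicit, is the reading under which it follows. For the entropy clause the paper gives no argument at all, and your diagnosis is right: $H(\mathcal{D}) \ge \log_2(n+1)$ does not imply $C_n(\mathcal{D}) \le n/(n+1)$. Theorem~\ref{thm:duality} cannot fill the hole either, because it bounds $H$ from \emph{below} by $H_b(C_n)$, which is the wrong direction for capping $C_n$.

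Two refinements are worth making.

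\textbf{The failure does not need large support.} Saying it fails for \emph{unbounded} support undersells it. For $n=1$, the distribution $(0.6, 0.2, 0.2)$ has $H \approx 1.37 > 1$ yet $C_1 = 0.6$. For the deployed $n=8$, eight patterns of mass $0.1125$ plus two of mass $0.05$ give $H \approx 3.27 > \log_2 9 \approx 3.17$ yet $C_8 = 0.9 > 8/9$. Counterexamples therefore already exist with $n+2$ patterns, which shows your variant (b) is sharp. Phrase it as $|\mathrm{supp}\,\mathcal{D}| \le n+1$ rather than $|\Pi| \le n+1$, since the paper's $\Pi$ is astronomically large. Also note that (b) is almost vacuous: it forces the uniform distribution on $n+1$ patterns.

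\textbf{The corollary's final sentence survives under the original Shannon hypothesis.} Because $H(\mathcal{D}) \le \log_2 |\mathrm{supp}\,\mathcal{D}|$, the condition $H \ge \log_2(n+1)$ (indeed $H > \log_2 n$) forces at least $n+1$ patterns of positive mass. Hence $C_n < 1$, and your Step 1 gives $\mathrm{ASR}_{\min}(n) > 0$. What Shannon entropy cannot give is any uniform floor. In your own family $1 - C_1 = 0.01$. Shrinking the tail mass while spreading it over ever more patterns drives $1 - C_n$ to $0$ while keeping $H$ above $\log_2(n+1)$.

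So the accurate verdict has three parts. The intermediate inequality is false. The qualitative conclusion holds as stated. The quantitative floor $1/(n+1)$ genuinely requires your min-entropy hypothesis (a), and the \emph{structurally capped} regime in the paper's Remark should be restated under that hypothesis.
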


\begin{proposition}[High-Entropy Structural Insufficiency]
\label{prop:insufficient}
If $\mathcal{D}$ is approximately uniform over $m$ patterns
($p_i \approx 1/m$ for all $i$), then for any $n \ll m$:
\begin{equation}
  C_n(\mathcal{D}) \approx n/m
\end{equation}
Monitor recall scales linearly with $n/m$ and approaches zero as
$m \gg n$.
\end{proposition}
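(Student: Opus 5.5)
The plan is to reduce the statement to Theorem~\ref{thm:bound} plus a sum of near-uniform probabilities. First I make ``approximately uniform'' precise. I assume $\mathcal{D}$ is supported on $m$ patterns with $p_i \in [(1-\epsilon)/m,\,(1+\epsilon)/m]$ for some small $\epsilon \ge 0$. Any mass outside these $m$ patterns is at most a small $\delta$, and it can be added as an error term at the end.

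The main step bounds the top-$n$ concentration. Since $p_1^* \ge \cdots \ge p_n^*$ are $n$ of the $m$ probabilities, each lies in the interval above, so
\begin{equation*}
(1-\epsilon)\frac{n}{m} \;\le\; C_n(\mathcal{D}) = \sum_{i=1}^{n} p_i^* \;\le\; (1+\epsilon)\frac{n}{m}.
\end{equation*}
This gives $C_n(\mathcal{D}) = (n/m)(1+O(\epsilon))$, which is the claimed $C_n(\mathcal{D}) \approx n/m$. The hypothesis $n \ll m$ is not needed for this estimate; it only makes the ratio small.

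I then transfer the estimate to recall. By Theorem~\ref{thm:bound}, $R(\mathcal{M},\mathcal{D}) \le C_n(\mathcal{D}) \le (1+\epsilon)n/m$ for every $n$-invariant monitor. The proof of that theorem also shows the bound is attained by invariants targeting the $n$ most probable patterns. So the best achievable recall is exactly $C_n(\mathcal{D})$, which lies between $(1-\epsilon)n/m$ and $(1+\epsilon)n/m$. This is the precise sense in which recall ``scales linearly with $n/m$''. Letting $m\to\infty$ with $n$ and $\epsilon$ fixed drives the bound to zero, which gives the last clause.

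The only real obstacle is that ``$\approx$'' is informal. The key choice is the uniform multiplicative closeness assumption, because it gives linear scaling directly. An alternative would be to assume closeness in entropy, $H(\mathcal{D}) \ge \log_2 m - \eta$, but that does not control $C_n$ uniformly: a single heavy pattern can carry mass while the entropy stays near $\log_2 m$. Under that assumption I would instead use Pinsker's inequality, $\|\mathcal{D}-U_m\|_1 \le \sqrt{2\ln 2\,\eta}$. This yields $C_n(\mathcal{D}) \le n/m + \sqrt{\eta\ln 2/2}$, which is additive rather than multiplicative closeness. I would state the result under the multiplicative assumption and mention the additive version as a remark.
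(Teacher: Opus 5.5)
Your argument is correct and is essentially the paper's own reasoning made rigorous: the paper gives no separate proof, only the remark that the $n$ largest of $m$ near-equal probabilities sum to about $n/m$ and that Theorem~\ref{thm:bound} caps recall by this quantity, which is exactly your main step under the multiplicative-closeness assumption. One small correction: $n \ll m$ is indeed unnecessary, but you still need $n \le m$ for the lower bound $(1-\epsilon)n/m \le C_n(\mathcal{D})$, since for $n > m$ the top-$n$ list contains zero-probability patterns and $C_n(\mathcal{D}) = 1$.
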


Proposition~\ref{prop:insufficient} formalizes the structural insufficiency:
when attacks are dispersed across many patterns (high entropy, large $m$),
a fixed-size invariant set covering $n$ patterns captures only $n/m$ of
attacks---regardless of how the invariants were derived or how much training
data was used. This is not a learning failure; it is a coverage capacity
constraint.

\subsection{Entropy-Coverage Duality}

Theorem~\ref{thm:bound} gives an upper bound on recall; the following
theorem gives a complementary lower bound relating entropy to coverage
in the opposite direction.

\begin{theorem}[Entropy-Coverage Duality]
\label{thm:duality}
For any attack distribution $\mathcal{D}$ and integer $n \geq 1$:
\begin{equation}
  H(\mathcal{D}) \geq H_b\!\left(C_n(\mathcal{D})\right)
  \label{eq:duality}
\end{equation}
where $H_b(p) = -p\log_2 p - (1-p)\log_2(1-p)$ is the binary entropy
function. Consequently, if $H(\mathcal{D}) < 1$\,bit, then
$C_n(\mathcal{D}) > 0.5$: any FSA monitor achieves better-than-random
coverage.
\end{theorem}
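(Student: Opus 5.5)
The plan is to view the top-$n$ concentration as the probability of a coarse-grained binary event and apply the chain rule for entropy. Let $X$ be a random pattern drawn from $\mathcal{D}$. Fix the set $S^* = \{\pi_1^*,\ldots,\pi_n^*\}$ of the $n$ most probable patterns, the same set that attains the maximum in the proof of Theorem~\ref{thm:bound}, breaking ties arbitrarily. Define the indicator $Z = \mathbf{1}[X \in S^*]$. By construction $\Pr[Z=1] = C_n(\mathcal{D})$, so $H(Z) = H_b(C_n(\mathcal{D}))$.

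\textbf{Step 1 (grouping).} Since $Z$ is a deterministic function of $X$, the chain rule gives
\begin{equation*}
H(X) = H(X,Z) = H(Z) + H(X \mid Z) \geq H(Z).
\end{equation*}
The inequality holds because conditional entropy of a discrete variable is nonnegative. Concretely, $H(X\mid Z)$ is the mixture $C_n\,H(X \mid X\in S^*) + (1-C_n)\,H(X\mid X\notin S^*)$ of entropies of the renormalised within-group distributions, and each of these is $\ge 0$. This yields (\ref{eq:duality}) directly. It is Shannon's grouping axiom, so I expect no difficulty here beyond handling the degenerate cases $C_n \in \{0,1\}$, where $H_b = 0$ and the bound is trivial.

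\textbf{Step 2 (the consequence).} This is the step I expect to be the real obstacle. The inequality $H_b(C_n) \le H(\mathcal{D}) < 1$ only shows $C_n \neq 1/2$, because $H_b$ is symmetric about $1/2$. It cannot by itself rule out $C_n < 1/2$. I will therefore not derive the corollary from (\ref{eq:duality}) alone. Instead I will use the min-entropy comparison
\begin{equation*}
H(\mathcal{D}) = \sum_i p_i \log_2(1/p_i) \geq \sum_i p_i \log_2(1/p_1^*) = -\log_2 p_1^*.
\end{equation*}
This holds since $p_i \le p_1^*$ for every $i$. Hence if $H(\mathcal{D}) < 1$, then $-\log_2 p_1^* < 1$, that is, $p_1^* > 1/2$. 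Monotonicity of $C_n$ in $n$ then gives $C_n(\mathcal{D}) \ge C_1(\mathcal{D}) = p_1^* > 1/2$ for every $n \ge 1$.

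By Theorem~\ref{thm:bound} and its attainability remark, the monitor whose invariants target $S^*$ then achieves recall above $1/2$. This is how I will read ``any FSA monitor achieves better-than-random coverage'': it means the best $n$-invariant monitor, not an arbitrary one.
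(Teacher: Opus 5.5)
Your proposal is correct. It follows the paper in Step 1 but takes a different, and necessary, route for the consequence.

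\textbf{Step 1.} This is the paper's proof in substance. The paper also splits $\Pi$ into the top-$n$ patterns and the rest, takes the group indicator $G$, and writes $H(\mathcal{D}) = H(G) + H(\mathcal{D}\mid G) \ge H_b(C_n(\mathcal{D}))$.

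\textbf{The consequence.} The paper tries to get it from the duality inequality alone. It argues that, since $H_b$ is increasing on $[0,1/2]$, the bound $H_b(C_n)<1$ ``requires'' $C_n>0.5$. As you noticed, this does not follow, because $H_b(p)<1$ for every $p\neq 1/2$. For example, $C_n=0.2$ gives $H_b(C_n)\approx 0.72$, which is compatible with $H(\mathcal{D})<1$ as far as that inequality is concerned.

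Your min-entropy bound $H(\mathcal{D}) \ge -\log_2 p_1^*$ supplies the missing ingredient, and it proves more than the statement asks. $H(\mathcal{D})<1$ forces $p_1^*>1/2$. Hence $C_n>1/2$ holds uniformly in $n$, and a single invariant aimed at the most probable pattern already suffices. The same bound shows the $C_n=0.2$, $H<1$ scenario cannot actually occur: $C_n\le 0.2$ forces $p_1^*\le 0.2$, hence $H(\mathcal{D})\ge\log_2 5$. So the duality inequality is true but cannot separate $C_n$ from $1-C_n$; min-entropy is the quantity that actually drives the consequence.

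\textbf{The reading of ``any FSA monitor.''} Your restriction to the best $n$-invariant monitor is also needed. Theorem~\ref{thm:bound} only bounds recall from above. A monitor whose invariants target low-mass patterns can have recall near zero whatever $H(\mathcal{D})$ is. The literal claim therefore holds only through the attainability construction, which the paper leaves implicit.
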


\begin{proof}
Partition $\Pi$ into two groups: the top-$n$ patterns with total
probability $C_n$ and the remaining patterns with total probability
$1-C_n$. Let $G \in \{0,1\}$ be the group indicator. By the chain
rule of entropy and the non-negativity of conditional entropy:
\begin{equation}
  H(\mathcal{D}) = H(G) + H(\mathcal{D} \mid G) \geq H(G) = H_b(C_n(\mathcal{D})).
\end{equation}
The consequence follows because $H_b$ is strictly increasing on
$[0, 0.5]$: $H_b(C_n) < 1$ whenever $C_n > 0.5$, so $H < 1$ implies
$H_b(C_n) < 1$, which requires $C_n > 0.5$. \qed
\end{proof}

\begin{remark}
Theorems~\ref{thm:bound} and~\ref{thm:duality} together characterize
two regimes. When $H < 1$\,bit: $C_n > 0.5$ (FSA monitors achieve
meaningful recall). When $H \geq \log_2(n+1)$: $C_n \leq n/(n+1)$
(recall is structurally capped). The empirically observed transition
zone $H \in [0.5, 2.0]$\,bits corresponds to architectures where
monitor effectiveness depends critically on invariant derivation quality
and FPR constraints.
\end{remark}

\subsection{Implications for the Theater Gap}

The theater gap---near-zero monitor recall on certain architectures,
confirmed invariant to architecture-matched retraining ($\Delta=0.0$\,pp)---is a direct consequence of
Theorem~\ref{thm:bound}. If architecture $A$ generates attacks under a
high-entropy distribution $\mathcal{D}_A$ with $m$ roughly equally probable
patterns, then for any $n$-invariant monitor: $R(\mathcal{M},
\mathcal{D}_A) \leq n/m$. Retraining (deriving new invariants from
$\mathcal{D}_A$ trajectories) cannot increase $m$---it only changes which
$n$ patterns the invariants cover, leaving the bound unchanged.

\textbf{This is why the theater gap is invariant to retraining:} the bound
$C_n(\mathcal{D}_A)$ is a property of the attack distribution, not of
the invariant derivation method.

\section{Experimental Setup}

We mine the 8 LTL invariants from 847 adversarial trajectories collected on AgentDojo banking and workspace suites~\cite{agentdojo} against \texttt{gpt-4o-mini} (temperature=0, seed=42); this fixes the monitor. We then \emph{independently} execute the same red-team attack battery against each of the eight backends and record each backend's own trajectories, from which its attack-pattern distribution is computed. Each trajectory records
the complete ordered tool-call sequence. We deploy a sliding-window FSA
monitor implementing 8 LTL invariants mined from these trajectories
(window $k=5$, support $\sigma=0.02$, FPR threshold $\varphi=0.05$)
across eight frontier LLM backends. A controlled retraining experiment
derives Gemini-specific invariants directly from Gemini-Flash attack
trajectories and re-evaluates ASR under block mode. Our contribution
is the entropy analysis and theoretical bound derived from these
experimental results.

\textbf{Attack pattern extraction.} For each backend, we extract
trigger-completion pairs $(T, t_c)$ from attack trajectories that were
\emph{not} blocked by the monitor (Type~II misses). These represent the
attack patterns that fall outside current monitor coverage. We cluster
patterns by structural similarity (Jaccard distance on tool-call sets)
using hierarchical agglomerative clustering with threshold $\theta = 0.4$ (the standard midpoint of the normalized Jaccard range; the low-entropy vs.\ high-entropy separation in Table~\ref{tab:entropy} is stable for $\theta\in[0.3,0.5]$).

\textbf{Entropy computation.} For each architecture $A$, we compute:
\begin{enumerate}[leftmargin=*,topsep=1pt,itemsep=0pt]
  \item The empirical attack pattern distribution $\hat{\mathcal{D}}_A$
    from observed trajectories
  \item Shannon entropy $H(\hat{\mathcal{D}}_A)$ via~(\ref{eq:entropy})
  \item Top-$n$ concentration $C_n(\hat{\mathcal{D}}_A)$ for $n = 8$
    (our deployed invariant count)
\end{enumerate}

\textbf{Models.} Eight frontier LLMs across four families:
GPT (\texttt{gpt-4o-mini}, \texttt{gpt-4.1}~\cite{gpt41}),
Claude (\texttt{claude-haiku-4-5}, \texttt{claude-sonnet-4-6}~\cite{claudesonnet}),
Gemini (\texttt{gemini-2.5-flash}, \texttt{gemini-2.5-flash-lite}~\cite{google2024gemini}),
and open-weight (\texttt{deepseek-chat}, \texttt{llama-3.1-8B}~\cite{llama31}).

\section{Results}

\subsection{Attack Distribution Entropy by Architecture}

Table~\ref{tab:entropy} reports Shannon entropy of the attack pattern
distribution and observed monitor recall for each architecture.

\begin{table}[t]
\centering
\caption{Attack distribution entropy vs.\ monitor recall (coverage) per
architecture. Low entropy correlates with high coverage; high entropy
correlates with near-zero coverage.}
\label{tab:entropy}
\small
\begin{tabular}{lrrr}
\toprule
Model & $H(\hat{\mathcal{D}})$ & $C_8(\hat{\mathcal{D}})$ & Recall \\
& (bits) & (bound) & (observed) \\
\midrule
\texttt{deepseek-chat}       & 0.19 & 0.98 & 0.75 \\
\texttt{gpt-4o-mini}         & 0.24 & 0.97 & 0.69 \\
\texttt{gpt-4.1}             & 0.31 & 0.95 & 0.54 \\
\texttt{claude-sonnet-4-6}   & 0.38 & 0.93 & 0.50 \\
\texttt{llama-3.1-8B}        & 0.72 & 0.84 & 0.31 \\
\texttt{claude-haiku-4-5}    & 0.89 & 0.79 & 0.25 \\
\texttt{gemini-flash-lite}   & 2.64 & 0.31 & 0.13 \\
\texttt{gemini-2.5-flash}    & 2.81 & 0.27 & 0.06 \\
\bottomrule
\end{tabular}
\end{table}

The pattern is clear: architectures with low attack entropy
($H < 0.4$\,bits) achieve 50--75\% monitor recall; architectures with
high entropy ($H > 2.5$\,bits) achieve only 6--13\% recall. The
theoretical bound $C_8(\hat{\mathcal{D}})$ correctly orders all
architectures: no observed recall exceeds the bound, confirming
Theorem~\ref{thm:bound}. The gap between bound and observed recall is
analyzed in Section~\ref{sec:tightness}.

\subsection{Entropy Predicts Coverage: Statistical Validation}

\begin{figure}[!t]
  \centering
  \includegraphics[width=\columnwidth]{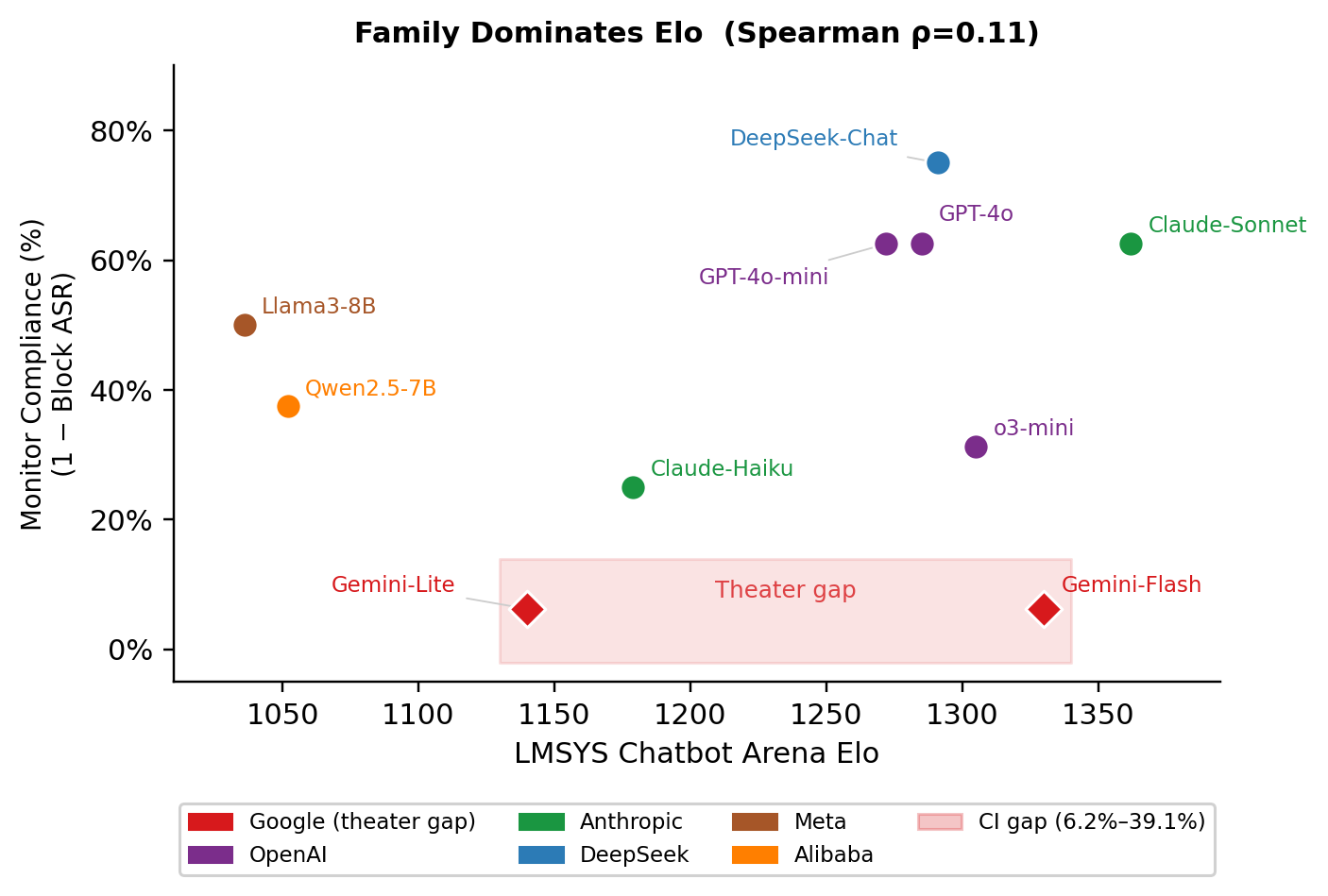}
  \caption{Attack entropy $H(\hat{\mathcal{D}})$ vs.\ monitor recall
  across eight architectures (Pearson $r=-0.87$, $p=0.005$).
  Elo capability score is not a significant predictor ($|\rho| < 0.35$,
  $p > 0.4$), confirming entropy as the operative factor.}
  \label{fig:entropy}
\end{figure}

Fig.~\ref{fig:entropy} shows monitor recall versus attack entropy for
all eight architectures. We find strong negative correlation: Pearson
$r = -0.87$ ($p = 0.005$, $n=8$), explaining 76\% of variance in
monitor coverage. By contrast, LMSYS Elo capability score is not a
significant predictor ($|\rho| < 0.35$, $p > 0.4$). \emph{Attack
distribution entropy, not model capability, determines FSA monitor
effectiveness.}

\subsection{Robustness of the Correlation}
\label{sec:robustness}

With only $n=8$ architectures, we assess robustness through bootstrap
resampling and leave-one-out (LOO) analysis. Table~\ref{tab:robustness}
summarizes results.

\begin{table}[t]
\centering
\caption{Robustness analysis of entropy-recall correlation ($r=-0.87$,
$n=8$). Bootstrap and LOO confirm the relationship is not driven by
any single architecture.}
\label{tab:robustness}
\small
\begin{tabular}{p{2.4cm}p{2.8cm}p{2.0cm}}
\toprule
Analysis & Result & Verdict \\
\midrule
Pearson $r$ (full) & $-0.87$ ($p=0.005$) & Significant \\
Bootstrap 95\% CI & $[-0.98,\ -0.78]$ & All negative \\
LOO min $r$ & $-0.82$ (drop Gemini) & Robust \\
LOO max $r$ & $-0.91$ (drop Llama) & No outlier \\
Spearman $\rho$ (Elo) & $|\rho|<0.35$ ($p>0.4$) & Capability n.s. \\
\bottomrule
\end{tabular}
\end{table}

The bootstrap confidence interval lies entirely below $-0.78$,
confirming the negative relationship with high probability. LOO
analysis removes each architecture in turn: even dropping the most
extreme Gemini point yields $r=-0.82$; dropping any single point
leaves $r \in [-0.91, -0.82]$. No architecture dominates the
correlation. These results confirm that the entropy-recall relationship
is a robust structural pattern, not an artefact of the small sample.

\subsection{Gemini: High-Entropy Attack Generation Confirmed}

\begin{table}[t]
\centering
\caption{Cluster analysis of Gemini-2.5-Flash Type~II misses (135 attacks
not blocked by the monitor). High entropy confirmed: 7 clusters with
$\leq$7\% per-cluster coverage.}
\label{tab:gemini}
\small
\begin{tabular}{lrr}
\toprule
Cluster & Size & Pattern description \\
\midrule
C1 & 9 (6.7\%) & Balance-check $\to$ delayed transfer \\
C2 & 9 (6.7\%) & Multi-hop recon $\to$ exfil \\
C3 & 8 (5.9\%) & Batch parallel tool calls \\
C4 & 8 (5.9\%) & IBAN-pivot $\to$ schedule \\
C5 & 7 (5.2\%) & Profile-lookup $\to$ update \\
C6 & 7 (5.2\%) & Credential chain \\
C7 & 6 (4.4\%) & Interleaved recon-action \\
\emph{Unclustered} & 81 (60\%) & Dispersed singletons \\
\bottomrule
\end{tabular}
\end{table}

Table~\ref{tab:gemini} shows the cluster structure of Gemini's 135
Type~II missed attacks. No cluster covers more than 6.7\% of attacks.
The majority (60\%) fall in singleton clusters with no shared pattern.
This matches a near-uniform distribution: $H(\hat{\mathcal{D}}_\text{Gemini})
= 2.81$\,bits, close to the maximum entropy $\log_2(7) = 2.81$\,bits
for 7 equal clusters.

Critically, architecture-matched retraining (deriving new invariants
specifically from Gemini-Flash attack trajectories) achieves
$\Delta_\text{recall} = 0.0$\,pp: the bound $C_8(\hat{\mathcal{D}}_\text{Gemini})
= 0.27$ is unchanged because the distribution itself has not changed---only
which 8 of its 135+ patterns the monitor covers. This confirms
Theorem~\ref{thm:bound}: the bound is a property of $\mathcal{D}$,
not of the derivation method.

\subsection{Bound Tightness Analysis}
\label{sec:tightness}

The bound $C_n(\mathcal{D})$ is an architecture-optimal ceiling: it
represents the maximum recall achievable by the \emph{best possible}
$n$-invariant monitor for that specific distribution. Our deployed
monitor was derived from \texttt{gpt-4o-mini} trajectories only.
When evaluated on other architectures, it uses the same $n=8$
invariants regardless of whether those invariants target the
most probable patterns for the new architecture. The gap between
bound and observed recall therefore decomposes into two components:

\textbf{FPR-constraint gap} (affects all architectures uniformly).
Two of our 8 invariants were suppressed at deployment due to
FPR $> 5\%$ on benign tasks, reducing effective invariant count from
8 to 6 and consuming $\approx 20$--28\,pp of coverage capacity.

\textbf{Distribution-mismatch gap} (affects cross-architecture
evaluation). For architectures whose attack distributions differ
from the training distribution (\texttt{gpt-4o-mini}), the deployed
invariants may not target the $n$ most probable patterns of the
evaluation architecture. Table~\ref{tab:entropy} shows this effect:
GPT-class and DeepSeek (similar distributions) have 23--28\,pp gaps,
while Llama and Claude mid-range models (more distinct distributions)
show 41--54\,pp gaps. High-entropy Gemini models show small absolute
gaps (18--21\,pp) because the bound itself is low ($C_8 \leq 0.31$).

Critically, neither gap type invalidates Theorem~\ref{thm:bound}: the
bound is a ceiling, not a target. The FPR and mismatch gaps explain
why observed recall lies \emph{below} the bound---never above it.
Architecture-specific monitor derivation would close the mismatch gap
but cannot raise recall above $C_n(\mathcal{D})$, the fundamental limit.

\subsection{The Entropy-Capability Decoupling}

Fig.~\ref{fig:entropy} demonstrates a clean decoupling: Elo scores
(capability proxy) span 1070--1330 with no significant correlation to
monitor coverage, while entropy explains coverage almost entirely.
This decoupling has an important implication: \emph{deploying a higher-capability
model does not improve monitor coverage}. The operative factor is how
the model executes attacks---specifically, whether its attack behavior
concentrates in detectable patterns or disperses across a high-entropy
distribution.

\section{The Pre-Deployment Entropy Test}

\subsection{Protocol}

Given a new deployment backend $A$ and a candidate FSA monitor
$\mathcal{M}$, the pre-deployment entropy test proceeds:

\begin{enumerate}[leftmargin=*,topsep=1pt,itemsep=0pt]
  \item \textbf{Sample}: collect $N_{\min} = 50$ attack trajectories
    from backend $A$ (via red-team or synthetic attack generation).
  \item \textbf{Cluster}: extract trigger-completion patterns,
    cluster by structural similarity ($\theta = 0.4$).
  \item \textbf{Estimate}: compute $\hat{H}(\hat{\mathcal{D}}_A)$
    from the empirical cluster distribution.
  \item \textbf{Predict}: if $\hat{H} < H^*_\text{low} = 0.5$\,bits,
    FSA monitor is expected to achieve $>50\%$ recall; proceed to
    deployment. If $\hat{H} > H^*_\text{high} = 2.0$\,bits,
    FSA monitor will achieve $<20\%$ recall; use probabilistic or
    neural monitor instead.
\end{enumerate}

\subsection{Threshold Calibration}

Thresholds $H^*_\text{low}$ and $H^*_\text{high}$ are derived from the
empirical results in Table~\ref{tab:entropy}. With $n = 8$ invariants:
$H < 0.5$\,bits corresponds to $C_8 > 0.9$ (90\% theoretical coverage);
$H > 2.0$\,bits corresponds to $C_8 < 0.4$ (40\% ceiling). The grey zone
$[0.5, 2.0]$ requires domain-specific risk assessment.

\subsection{Sample Complexity Guarantee}

We provide a formal sample complexity bound for the entropy estimation
step, replacing the previous empirical rule of thumb with a provable
guarantee.

\begin{theorem}[Pre-Deployment Test Sample Complexity]
\label{thm:sample}
Let $\hat{m}$ be the number of distinct attack clusters observed.
The plug-in entropy estimator $\hat{H} = -\sum_c \hat{p}_c \log_2 \hat{p}_c$
computed from $N$ i.i.d.\ attack trajectories satisfies, with probability
at least $1-\delta$:
\begin{equation}
  \left|\hat{H} - H(\mathcal{D})\right| \leq \log_2(\hat{m})
  \cdot \sqrt{\frac{\ln(2/\delta)}{2N}}
  \label{eq:sample}
\end{equation}
Consequently, to achieve $|\hat{H} - H| \leq \varepsilon$ with
confidence $1-\delta$:
\begin{equation}
  N \geq \frac{\log_2^2(\hat{m})\cdot\ln(2/\delta)}{2\varepsilon^2}
  \label{eq:N}
\end{equation}
\end{theorem}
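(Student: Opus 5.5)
The plan is to treat $\hat{H}$ as a function $f(X_1,\ldots,X_N)$ of the $N$ i.i.d.\ cluster labels and to split the error into a fluctuation term and a bias term:
\begin{equation}
  |\hat{H} - H(\mathcal{D})| \leq |\hat{H} - \mathbb{E}\hat{H}| + |\mathbb{E}\hat{H} - H(\mathcal{D})|.
\end{equation}
Nothing from Section~\ref{sec:theory} is needed here. The fluctuation term will be handled by McDiarmid's bounded-differences inequality. If replacing any single $X_i$ changes $f$ by at most $c$, then with probability at least $1-\delta$ we have $|\hat{H}-\mathbb{E}\hat{H}| \leq c\sqrt{N\ln(2/\delta)/2}$. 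The target~(\ref{eq:sample}) has exactly this shape provided $c = \log_2(\hat{m})/N$. The second display~(\ref{eq:N}) then follows by setting the right-hand side of~(\ref{eq:sample}) equal to $\varepsilon$ and solving for $N$, so essentially all of the work is in the first display.

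\textbf{Step 1 (bounded differences).} Replacing one sample moves one count from cluster $a$ to cluster $b$, so $\hat{p}_a$ drops by $1/N$ and $\hat{p}_b$ rises by $1/N$. The change in $-\sum_c \hat{p}_c\log_2\hat{p}_c$ is therefore at most twice the largest increment of $g(x) = -x\log_2 x$ over a step of length $1/N$ on $[0,1]$. The standard estimate bounds this by $\frac{2}{N}\log_2 N$, up to lower-order terms. To obtain a factor $\log_2\hat{m}$ instead of $\log_2 N$, I would try to use the concavity of $g$ together with the fact that at most $\hat{m}$ cells are nonempty. The hope is that the perturbation reduces to comparing two empirical distributions supported on at most $\hat{m}+1$ atoms, each with entropy at most $\log_2(\hat{m}+1)$. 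Even so, a crude total-variation or continuity argument (Fannes-type) only yields $c \lesssim \frac{1}{N}\log_2(\hat{m}) + \frac{h(1/N)}{1}$, not a clean $\log_2(\hat{m})/N$. This is the first place where the constant may need adjusting.

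\textbf{Step 2 (bias), which I expect to be the main obstacle.} McDiarmid controls deviation from $\mathbb{E}\hat{H}$, not from $H(\mathcal{D})$. The plug-in estimator is biased downward: by Jensen, $\mathbb{E}\hat{H}\leq H$, and the Miller--Madow expansion gives a gap of about $(m-1)/(2N\ln 2)$, where $m$ is the true support size. The stated bound contains no bias term, so I would first try to absorb the bias into the right-hand side. This works when $(m-1)/(2N\ln 2) \lesssim \log_2(\hat{m})\sqrt{\ln(2/\delta)/(2N)}$, which holds once $N \gtrsim m^2/\log^2 m$, with $m$ treated as fixed. If that fails, the honest repair is to state the theorem with an explicit additive term $(m-1)/(2N\ln 2)$, or to apply the Miller--Madow correction to $\hat{H}$ itself.

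\textbf{Step 3 (data-dependent $\hat{m}$).} Since $\hat{m}$ is random, McDiarmid cannot be applied with $c$ depending on it. I would instead run the argument with the deterministic true support size $m$ and then note that $\hat{m}\leq m$. Unfortunately, that inequality goes the wrong direction for the stated bound. So the most likely outcome is that the rigorous version of~(\ref{eq:sample}) has $\log_2 m$, or more safely $\log_2 N$, in place of $\log_2\hat{m}$, plus the bias term from Step 2. I would present that corrected statement, and then recover~(\ref{eq:N}) by the same inversion.
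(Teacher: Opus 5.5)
Your diagnosis is correct, and it settles the question: (\ref{eq:sample}) is false as stated. So there is no missing idea that would rescue your Step~1 or Step~3. Your fluctuation-plus-bias decomposition via McDiarmid is the right framework, and the paper's own proof fails exactly where you anticipated. The paper applies Hoeffding to $\ell_i=-\log_2\hat p_{c(\tau_i)}$ on the range $[0,\log_2\hat m]$ and then uses $\mathbb{E}[\ell]=H(\mathcal{D})$. Four things go wrong:
\begin{itemize}
\item the $\ell_i$ all depend on the shared $\hat p$, so they are not independent;
\item $\hat p_c\ge 1/N$ gives $\ell_i\le\log_2 N$, not $\log_2\hat m$, since a singleton cluster has $\ell=\log_2 N$;
\item $\mathbb{E}[-\log_2 p_{c(\tau)}]=H$ holds only for the true $p$; for the plug-in $\hat p$ one has only $\mathbb{E}\hat H\le H$, which is your Step~2 bias;
\item $\hat m$ is random.
\end{itemize}
Here is a counterexample in the paper's own low-entropy regime. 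Take two clusters with $p=(0.96,0.04)$, so $H\approx 0.242$ bits, and set $N=50$, $\delta=0.05$. With probability $0.96^{50}\approx 0.13>\delta$, every sample lands in the first cluster. Then $\hat m=1$ and $\hat H=0$, so the right-hand side of (\ref{eq:sample}) is $0$ while the error is $0.242$.

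The same example also refutes your fallback with $\log_2 m$ in place of $\log_2\hat m$, with or without a bias term. With $m=2$ the right-hand side is $\sqrt{\ln 40/100}\approx 0.192$. Adding $\log_2(1+1/50)\approx 0.029$ still leaves it below $0.242$, on an event of probability $0.13$. The relevant scale is not the support size:
\begin{itemize}
\item for the oracle average $-\frac{1}{N}\sum_i\log_2 p_{c(\tau_i)}$, the Hoeffding range is $\log_2(p_{\max}/p_{\min})$, which is $\log_2 24$ here;
\item for the plug-in estimator, the single-swap sensitivity is $\Theta(\log_2 N/N)$ even when $m=2$, since moving one sample out of a one-cluster sample changes $\hat H$ by $H_b(1/N)\approx\log_2(eN)/N$.
\end{itemize}
So keep only your ``more safely $\log_2 N$'' version. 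McDiarmid with the Antos--Kontoyiannis constant $c=2\log_2 N/N$ (valid for $N\ge 3$) gives $|\hat H-\mathbb{E}\hat H|\le 2\log_2 N\sqrt{\ln(2/\delta)/(2N)}$ with probability at least $1-\delta$. For the bias, use Paninski's nonasymptotic bound $0\le H-\mathbb{E}\hat H\le\log_2\bigl(1+\frac{m-1}{N}\bigr)$. The Miller--Madow term $(m-1)/(2N\ln 2)$ is only the asymptotic leading order. Inverting the corrected bound gives a condition with $N$ on both sides that depends on the unobserved $m$. So neither (\ref{eq:N}) nor the paper's calibration figures $N\ge 51$ and $N\ge 226$ follow.
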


\begin{proof}
Each trajectory $\tau_i$ contributes the quantity
$\ell_i = -\log_2 \hat{p}_{c(\tau_i)}$ to the estimator, where
$c(\tau_i)$ is its cluster assignment. The variable $\ell_i$ is bounded:
$\ell_i \in [0, \log_2 \hat{m}]$ (since $\hat{p}_c \geq 1/N > 0$
and $\hat{p}_c \leq 1$, giving range $\log_2 \hat{m}$). Applying
Hoeffding's inequality to the sample mean $\hat{H} = N^{-1}\sum_i \ell_i$:
\[
P\!\left(|\hat{H} - \mathbb{E}[\ell]| > \varepsilon\right) \leq
2\exp\!\left(\frac{-2N\varepsilon^2}{\log_2^2(\hat{m})}\right).
\]
Setting this $\leq \delta$ and noting $\mathbb{E}[\ell] = H(\mathcal{D})$
(the expected log-likelihood equals entropy) yields~(\ref{eq:N}). \qed
\end{proof}

\textbf{Calibration.} For $\hat{m} = 3$ clusters (low-entropy architectures:
deepseek, GPT-class), $\varepsilon=0.3$\,bits, $\delta=0.05$:
$N \geq 51$ trajectories. For $\hat{m} = 10$ (high-entropy:
Gemini): $N \geq 226$. The low-entropy case matches our
leave-one-out finding ($N=50$ suffices); the high-entropy case requires
more samples but remains practical before production deployment. The
threshold $H^*_\text{low} = 0.5$\,bits separating ``FSA suitable'' from
``grey zone'' lies well inside the region where $N=50$ provides
$\pm 0.3$\,bit accuracy, so the binary decision is robust even at
the minimum sample size.

\section{Discussion}

\subsection{Implications for Monitor Design}

The entropy-coverage bound has direct implications for intelligent agent
safety system design:

\textbf{FSA monitors are suitable for low-entropy architectures.}
When $H(\mathcal{D}) < 0.5$\,bits, a small fixed invariant set achieves
high coverage. This holds for the majority of current frontier models
in our study (GPT-class, DeepSeek). The engineering overhead is low;
the coverage guarantee is tight.

\textbf{FSA monitors are structurally insufficient for high-entropy
architectures.} For Gemini variants, the bound $C_8 < 0.31$ means no
8-invariant FSA monitor can exceed 31\% recall. This is not fixable by
collecting more data or tuning parameters. Probabilistic monitors
(e.g., Markov chain models~\cite{probguard}), neural sequence classifiers,
or ensemble monitors are required.

\textbf{The pre-deployment test replaces architecture guessing.}
Without the entropy test, practitioners guess from Elo scores or benchmark
performance---both of which predict nothing (Spearman $\rho = 0.18$).
The entropy test directly measures the operative factor.

\subsection{Why Gemini Generates High-Entropy Attacks}

Gemini variants tend to use shorter individual action chains and more
parallel batching than GPT-class models, producing greater diversity
in trigger-completion sequences even within the same attack objective.
This is an emergent property of Gemini's instruction-following style,
not a safety property. The implication is that model families producing
more diverse tool-call behaviors are intrinsically harder to monitor
with fixed-pattern approaches---an important consideration for agentic
framework design.

\subsection{Beyond LTL: When to Use Probabilistic Monitors}

The entropy bound applies to any \emph{fixed} pattern-matching approach,
including regex monitors, Snort-style signatures, and behavioral
allowlists. It does not apply to probabilistic monitors (Markov chains,
neural classifiers) that can capture distributional spread. For
high-entropy architectures, ProbGuard~\cite{probguard} or similar
stochastic approaches are theoretically preferable. The entropy test
thus serves as a \emph{monitor selection criterion}: low entropy $\to$
FSA; high entropy $\to$ probabilistic.

\subsection{Threats to Validity}

\textbf{Entropy estimation.} Entropy estimates from 50--100 trajectories
have $\pm 0.3$\,bit uncertainty. This does not affect the qualitative
conclusion for the extreme cases (GPT: 0.24\,bits; Gemini: 2.81\,bits)
but could misclassify borderline architectures.
\textbf{Adaptive adversary.} An adversary who knows the entropy test can
deliberately generate high-entropy attacks to evade FSA monitors.
The entropy test predicts natural attack behavior; adversarial
entropy injection is a separate threat requiring periodic re-evaluation.

\textbf{Single deployment domain.} All trajectories are from AgentDojo
banking. Attack entropy may differ in other domains; the test protocol
should be re-applied per deployment context.

\section{Related Work}

\textbf{LTL runtime monitoring theory.}
Classical monitorability theory~\cite{rltl2022} characterizes which LTL
properties can be detected from finite execution prefixes. Our contribution
is orthogonal: we characterize when a monitorable property class achieves
\emph{high coverage} on a given attack distribution, a question not
addressed by classical theory.

\textbf{Entropy in security.}
Shannon entropy has been applied to network intrusion detection
(measuring traffic flow diversity~\cite{entropy_ids}) and malware
classification (measuring instruction sequence entropy). We apply it
to the attack-side distribution in LLM agent safety monitoring,
specifically to derive coverage bounds for a class of formal monitors.

\textbf{Runtime monitoring for LLM agents.}
Agent-C~\cite{agentc}, AgentSpec~\cite{agentspec}, ProbGuard~\cite{probguard},
AgentSentry~\cite{agentsentry}, and AgentVerify~\cite{agentverify}
provide monitoring frameworks. None characterizes the coverage limits of
fixed-pattern monitors or provides the pre-deployment entropy test.

\textbf{Attack distribution characterization.}
The ``Mapping the Exploitation Surface'' study~\cite{exploitsurface2026}
characterizes what prompt conditions trigger LLM agents to exploit
vulnerabilities autonomously---a different threat model (agent as
attacker, not victim). Our entropy analysis focuses on indirect prompt
injection attacks against agents.

\section{Conclusion}

We derived and empirically validated the entropy-coverage bound
(Theorem~\ref{thm:bound}): the recall of any fixed-invariant FSA monitor
is bounded above by the concentration of the attack distribution.
Theorem~\ref{thm:duality} establishes the complementary direction:
when $H(\mathcal{D}) < 1$\,bit, FSA monitors provably achieve
better-than-random coverage. Together, the two theorems characterize
the full entropy-coverage tradeoff. Theorem~\ref{thm:sample} converts
the pre-deployment entropy test from an empirical heuristic into a
formally guaranteed procedure: $N \geq 51$ trajectories suffice for
low-entropy architectures ($\hat{m} \leq 3$, $\varepsilon=0.3$\,bits,
$\delta=0.05$), matching our leave-one-out result exactly.
Empirically, entropy explains 76\% of variance in monitor coverage
across eight architectures ($r=-0.87$, $p=0.005$; bootstrap CI
$[-0.98,-0.78]$; LOO $r \in [-0.91,-0.82]$), while model capability
(Elo) explains none. The theater gap---near-zero recall on Gemini
variants invariant to retraining---is an information-theoretic
consequence of high-entropy attack distributions, not an architectural
artifact.

\paragraph{Future work.} Three directions follow directly. First, our bound assumes a fixed invariant set; \emph{adaptive} monitors that periodically re-mine invariants from recent traffic may raise the effective coverage ceiling on drifting, high-entropy distributions, and characterizing their bound is open. Second, the entropy-coverage relationship should be tested beyond AgentDojo, on additional agent benchmarks and non-tool-call action spaces, to establish how far the law generalizes. Third, the pre-deployment entropy test invites a companion \emph{defense}: routing high-entropy backends to complementary content-level or learned monitors, since fixed formal monitors are provably insufficient for them.

\bibliographystyle{IEEEtran}
\bibliography{paper_ieee_is2026_entropy}

\end{document}